\newcommand{\handout}[5]{
   \renewcommand{\thepage}{#1-\arabic{page}}
   \noindent
   \begin{center}
   \framebox{
      \vbox{
    \hbox to 5.78in { {\bf 6.893 Sub-linear Algorithms}
     	 \hfill #2 }
       \vspace{4mm}
       \hbox to 5.78in { {\Large \hfill #5  \hfill} }
       \vspace{2mm}
       \hbox to 5.78in { {\it #3 \hfill #4} }
      }
   }
   \end{center}
   \vspace*{4mm}
}
\newtheorem{theorem}{Theorem}
\newtheorem{remark1}[theorem]{Remark}
\newtheorem{example}[theorem]{Example}
\newcommand{\qed}{\rule{7pt}{7pt}}
\newenvironment{proof}{\noindent{\bf Proof}\hspace*{1em}}{\qed\bigskip}
\newenvironment{proof-sketch}{\noindent{\bf Sketch of Proof}\hspace*{1em}}{\qed\bigskip}
\newenvironment{proof-idea}{\noindent{\bf Proof Idea}\hspace*{1em}}{\qed\bigskip}
\newenvironment{proof-of-lemma}[1]{\noindent{\bf Proof of Lemma #1}\hspace*{1em}}{\qed\bigskip}
\newenvironment{proof-of-theorem}[1]{\noindent{\bf Proof of Theorem #1}\hspace*{1em}}{\qed\bigskip}
\newenvironment{proof-attempt}{\noindent{\bf Proof Attempt}\hspace*{1em}}{\qed\bigskip}
\def\fnum@figure{{\bf Figure \thefigure}}
\def\fnum@table{{\bf Table \thetable}}
\long\def\@mycaption#1[#2]#3{\addcontentsline{\csname
  ext@#1\endcsname}{#1}{\protect\numberline{\csname
  the#1\endcsname}{\ignorespaces #2}}\par
  \begingroup
    \@parboxrestore
    \small
    \@makecaption{\csname fnum@#1\endcsname}{\ignorespaces #3}\par
  \endgroup}
\def\mycaption{\refstepcounter\@captype \@dblarg{\@mycaption\@captype}}
\newcommand{\mathify}[1]{\ifmmode{#1}\else\mbox{$#1$}\fi}
\newcommand{\bigO}O
\title{Non-reversible, tuning- and rejection-free Markov chain  Monte Carlo via iterated random functions}
\author{Jelena Markovic\footnote{To whom the correspondence should be addressed. Email: jelenam@stanford.edu.} }
\author{Amir Sepehri}
\affil{\small Department of Statistics \\ Stanford University \\ 390 Serra Mall\\ Stanford CA 94305 }
\date{}
\providecommand{\keywords}[1]{\textbf{\textit{Keywords---}} #1}
\begin{document}

\maketitle

\begin{abstract}
In this work we present a non-reversible, tuning- and rejection-free Markov chain Monte Carlo (MCMC) which naturally fits in the framework of hit-and-run. The sampler only requires access to the gradient of the log-density function, hence the normalizing constant is not needed.
	 
We prove the proposed Markov chain is invariant for the target distribution and illustrate its applicability through a wide range of examples. We show that the sampler introduced in the present paper is intimately related to the continuous sampler of \cite{peters2012rejection, bouchard2017bouncy}. In particular, the computation is quite similar in the sense that both are centered around simulating an inhomogenuous Poisson process. The computation can be simplified when the gradient of the log-density admits a computationally efficient directional decomposition into a sum of two monotone functions. We apply our sampler in selective inference, gaining significant improvement over the formerly used sampler \citep{selective_sampler}. 
\end{abstract}

\keywords{Non-reversible Markov chain Monte Carlo; Iterated random functions; Hit-and-run Markov chain; Poisson process; Bouncy particle sampler;}


\section{Introduction}

Traditional Markov chain Monte Carlo (MCMC) methods such as Metropolis-Hastings are reversible by construction. Their ``diffusive'' behavior is known to slow down the convergence rate of the distribution of the samples of the chain to the target distribution. Non-reversible Markov chains gained much interest as it became known that they can mix much faster than the reversible ones \citep{diaconis2000analysis}. 

The present paper introduces a non-reversible, tuning- and rejection-free Markov chain. This sampler naturally combines the frameworks of iterated random functions \citep{diaconis1999iterated} and the hit-and-run \citep{andersen2007hit}. It is devised for sampling from a distribution by only using the gradient of its log-density, hence the normalizing constant of the density is not needed.

Markov chains can be constructed by iterating random functions on the state space $S$. For a family of functions $\{f_\theta\mid f_\theta: S \mapsto S , \theta \in \Theta \}$ and a distribution $\mu$ on the parameter space $\Theta$, one can construct a Markov chain as follows. From a current state $x \in S$ move to the next state $y = f_{\theta_0}(x)$, where $\theta_0$ is a random element of $\Theta$ distributed according $\mu$. For our purposes $\mu$ does not depend on $x$. This language provides a unifying tool for studying the properties of Markov chains. For example, \cite{diaconis1999iterated} show that under some conditions the constructed Markov chain will have a unique stationary distribution.

We introduce the proposed sampler via iterated random functions. Consider sampling from a density $\pi$ on $\mathbb{R}^d$. The new sampler can be constructed as follows. There is a family $\{f_{V,v}:(V,v)\in (0,1)\times\mathbb{S}^{d-1}\}$, where $\mathbb{S}^{d-1}$ is a unit sphere in $\mathbb{R}^d$, used together with $\mu$ being the uniform distribution on $\Theta = (0,1)\times \mathbb{S}^{d-1}$ to generate a Markov chain with $\pi$ as the stationary distribution. More specifically, starting from an initial state $X_0=x_0$, $X_1=f_{V_0,v_0}(x_0)$, $X_2=f_{V_1,v_1}(x_1)$ etc. Inductively, at each update step $n$, we sample the pair $(V_n,v_n)$ from $\mu$ and define
\begin{equation*}
	x_{n+1}=f_{V_n,v_n}(x_n).
\end{equation*}
The function $f_{V_n,v_n}$ is given as
\begin{equation} \label{eq:irf}
	f_{V_n,v_n}(x_n) = x_n + \tau\cdot v_n/2,
\end{equation}
where $\tau$ is a real number depending on $V_n$, $v_n$, and $x_n$; an explicit form for $\tau$ is given in Section \ref{sec:sampler}. As an example, when the target distribution is standard univariate Gaussian $\mathcal{N}(0,1)$ and the current state is $x_n= -1$, the function $f_{V_n,v_n}(x_n)$ is plotted as a function of $V_n$ and $v_n$ in Figure \ref{fig:iterated:random:function}.

\begin{figure}[h]
\centering
\includegraphics[width=0.6\textwidth]{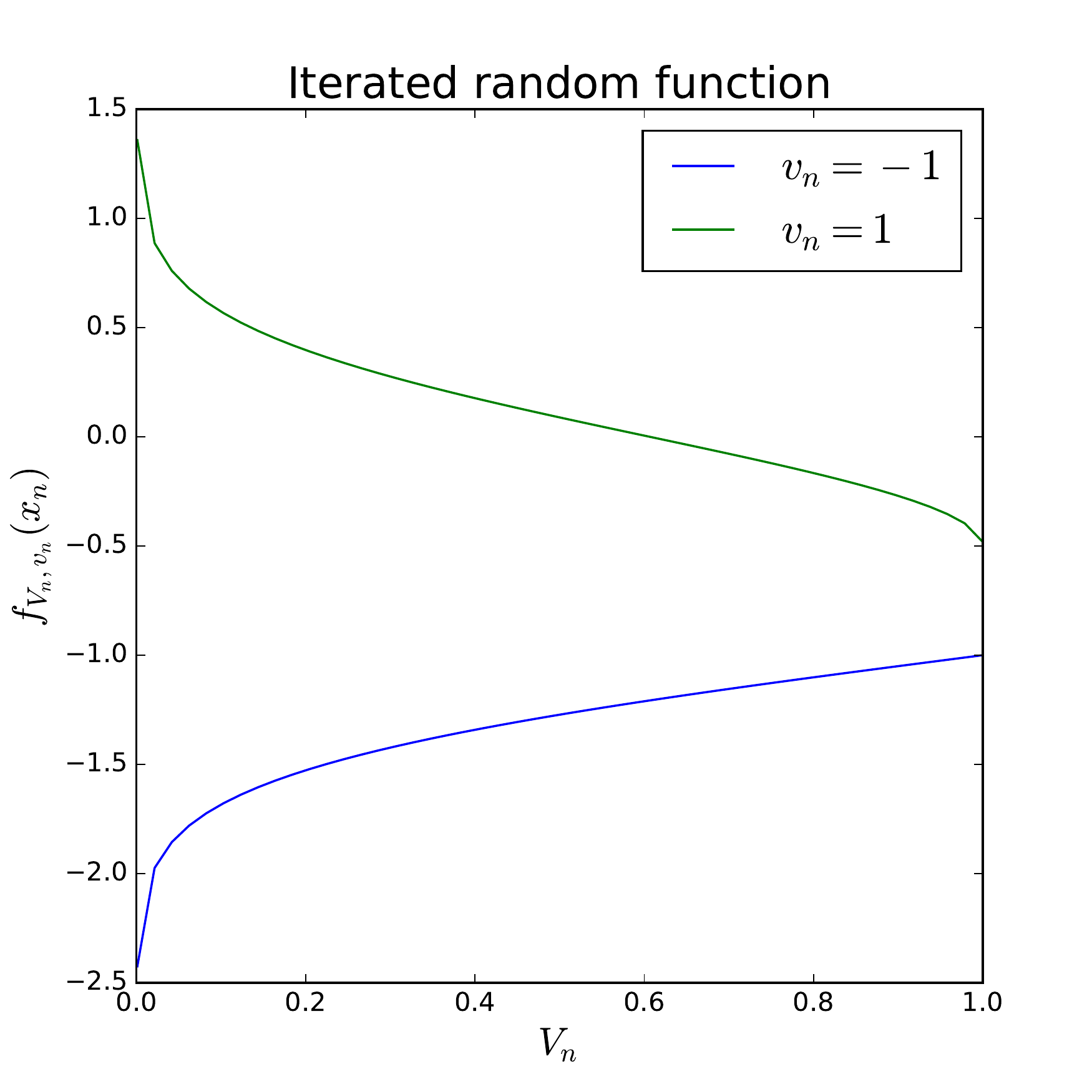}
\caption{Transition function $f_{V_n, v_n}(x_n)$ as a function of the variables $V_n$ and $v_n$ while $x_n$ is fixed at -1 for the chain corresponding to the standard Gaussian distribution $\mathcal{N}(0,1)$.}
\label{fig:iterated:random:function}
\end{figure}

This Markov chain has a natural description as a hit-and-run chain. The hit-and-run, in its simplest form, works as follows. Starting from a state $x$, pick a uniformly random direction $v$ and let $\pi_v$ be the restriction of $\pi$ to the line passing through $x$ in the direction of $v$. The next state is given by a sample from $\pi_v$. In many applications exact sampling from $\pi_v$ is not feasible, but hit-and-run only requires the next state to be generated from a Markov chain with $\pi_v$ as the stationary distribution. In our example, the parameter $(V,v)$ defines a line passing through the current state $x_n$, and the next state lies on that line. In particular, the new sampler in multiple dimensions is hit-and-run equipped with the same new sampler on each line.

Even though coming up with the non-reversible Markov chain is not straightforward, there is already a vast literature on various non-reversible Markov chain including theoretical guarantees about their favorable mixing properties \citep{duane1987hybrid, hwang1993accelerating, lelievre2013optimal}.
Since reversible Markov chains are easier to construct, i.e.~by finding a kernel that satisfies detailed balance equation, in some cases the researchers construct the non-reversible chains by extending the existing reversible ones \citep{chen2013accelerating, bierkens2016non, turitsyn2011irreversible, sun2010improving}.

The proposed sampler in this paper is related to a new class of non-reversible continuous-time Markov chains introduced in \cite{peters2012rejection} which represented a building block for the works of \cite{bouchard2017bouncy, bierkens2016zig, pakman2016stochastic, pakman2017binary, bierkens2017piecewise, wu2017generalized}. The chain here, as well as the Markov chains in their work, requires sampling from the distribution of the first arrival time of the Poisson process (PP) that depends on the log-density of the target distribution. All these works construct a continuous-time Markov chain so naturally they are not tuning-free since we might need to at least specify the discrete time points at which we evaluate the positions of the Markov chain path (Remark \ref{remark:discrete:vs:continuous:time}).

\subsection{Outline of the paper}

We introduce our main sampler in Section \ref{sec:sampler}, proving the target density is invariant with respect to its kernel in Section \ref{sec:invariance}.
We present a modified sampler with significant computational benefits that uses the decomposition of the negative log-density into a sum of two monotone functions in Section \ref{sec:sampling:via:decomposition}. The application of our sampler in selective inference is in Section \ref{sec:application}, where we sample from truncated log-concave density.

\section{Sampler}  \label{sec:sampler}

We describe precisely the iterated random function $f_{V_n,v_n}$ from \eqref{eq:irf}, driving the proposed sampler. Before describing $\tau$ as a function of the parameters $(V_n,v_n)\sim\mu$ and the current position $x_n$, we introduce some notation.
We write $\pi(x)$, $x\in\mathbb{R}^d$, as proportional to $e^{-U(x)}$, for a continuously differentiable ($C^1$) function $U(x):\mathbb{R}^d\rightarrow\mathbb{R}$, also referred to as a potential function. For now let us assume $U$ is $C^1$ on the whole of $\mathbb{R}^d$ but it suffices for $U$ to be $C^1$ on its domain. We discuss the truncated distributions in Section \ref{sec:truncated:log:concave}. We assume we have access to $U$ and its gradient $\nabla U(x)$.

Given $V_n, v_n$ and $x_n$, a positive real number $\tau$ becomes the following
\begin{equation} \label{eq:tau:equation}
	\tau=\min\left\{t\geq 0: -\log V_n \leq  \int_0^t\lambda(s)ds \right\}
\end{equation}
for $\lambda(s) = \left(\nabla U(x_n+v_n\cdot s)^Tv_n\right)_+$, where $F_+=\max\{F,0\}$ denotes the positive part of a function $F$.
Given $\tau$, the next point in the Markov chain becomes $x_{n+1}=x_n+ v_n\cdot\tau/2$. The interpretation is straightforward: given a random direction $v_n$, the Markov chain moves for time $\tau/2$ along that direction with speed $v_n$, moving for the total length of $v_n\cdot\tau/2$. The constructed Markov chain keeps the target distribution $\pi$ invariant (proof in Section \ref{sec:invariance}).

Before describing further interpretations of $\tau$, let us mention what the proposed chain looks like in two examples.

\begin{example}[Uniform distribution] We describe the update steps of the proposed Markov chain in the case of a univariate uniform distribution from $a$ to $b$ with probability density $\pi(x)=\frac{1}{b-a}\mathbb{I}_{\{a<x<b\}}$, $x\in\mathbb{R}$.
	Given the current position $x_n$ and the parameters $(V_n, v_n)$ from a uniform distribution on $(0,1)\times\{-1,+1\}$, the $\lambda(\cdot)$ function becomes $\lambda(t)=\left(U'(x_n+v_nt)\cdot v_n\right)_+$ $=\begin{cases} 0, & x_n+v_nt\in (a,b)\\ \infty, & \textnormal{otherwise} \end{cases}.$ In this case,
	$\tau$ depends on only $x_n, v_n$ as
	\begin{equation*}
		\tau=\begin{cases} b-x_n, & \textnormal{ for }\; v_n=1 \\ x_n-a, & \textnormal{ for }\; v_n=-1 \end{cases}.
	\end{equation*}
Given $x_n$, the update point $x_{n+1}$ becomes $\frac{b+x_n}{2}$ with probability 0.5 (when $v_n=1$) and $\frac{a+x_n}{2}$ with probability 0.5 (when $v_n=-1$). To see the $\textnormal{Unif}(a,b)$ distribution is invariant for this chain, denote the kernel of the proposed Markov chain as
\begin{equation*}
	K(x_n,x_{n+1})=\frac{1}{2}\delta_{\frac{x_n+b}{2}}(x_{n+1})+\frac{1}{2}\delta_{\frac{x_n+a}{2}}(x_{n+1}),
\end{equation*}
where $\delta_x(\cdot)$ is the Dirac mass located at $x$. Since 
\begin{equation*}
\begin{aligned}
	&\int_{-\infty}^{\infty}\pi(x_n)K(x_n,x_{n+1})dx_n=\frac{1}{2(b-a)}\int_a^b\delta_{\frac{x_n+b}{2}}(x_{n+1})dx_n+\frac{1}{2(b-a)}\int_a^b\delta_{\frac{x_n+a}{2}}(x_{n+1})dx_n \\
	&=\frac{1}{2(b-a)}\int_{\frac{a+b}{2}}^b\delta_z(x_{n+1})2dz+\frac{1}{2(b-a)}\int_a^{\frac{a+b}{2}}\delta_z(x_{n+1})2dz=\frac{1}{b-a}\mathbb{I}_{\{a<y<b\}}=\pi(x_{n+1}),
\end{aligned}	
\end{equation*}
we see $\pi$ is invariant (stationary) for the proposed kernel $K(x_n,x_{n+1})$.
\end{example}


\begin{example}[Beta distribution] \label{example:beta}
We consider an example of $Beta(\alpha,\beta)$ distribution with density $\pi(x)=\frac{x^{\alpha-1}(1-x)^{\beta-1}}{B(\alpha,\beta)}$, $x\in(0,1)$. The negative logarithm of this density becomes $U(x)=-(\alpha-1)\log x-(\beta-1)\log(1-x)+\log B(\alpha,\beta)$. We consider here the case $\alpha>1$ and $\beta>1$, implying $U$ is convex. The other cases are presented in the appendix in detail. 
Given $V_n, v_n$ and $x_n$, we compute $\tau$, differentiating between positive and negative velocity $v_n$. 
\begin{itemize}[leftmargin=*]
	\item Case $v_n=1$. Denote $t^*=\textnormal{arg}\:\underset{t\geq 0}{\min}\: U(x_n+t)=\left(\frac{1}{1+\frac{\beta-1}{\alpha-1}}-x_n\right)_+$. Then $U'(x_n+t)<0$ for $t<t^*$ and $U'(x_n+t)>0$ for $t>t^*$. $\tau$ solves
	\begin{equation*}
		-\log V_n=\int_0^{\tau} U'(x_n+t)_+dt=\int_{t^*}^\tau U'(x_n+t)dt=U(x_n+\tau)-U(x_n+t^*),
	\end{equation*}
	$\tau\in[t^*,1-x_n]$, which we solve for numerically though a line search.
	\item Case $v_n=-1$. Denote $t^*=\textnormal{arg}\:\underset{t\geq 0}{\min}\: U(x_n-t)=\left(x_n-\frac{1}{1+\frac{\beta-1}{\alpha-1}}\right)_+$. Then $U'(x_n-t)>0$ for $t<t^*$ and $U'(x_n-t)<0$ for $t>t^*$. $\tau$ solves
	\begin{equation*}
	\begin{aligned}
	 	-\log V_n &=\int_0^{\tau}(-U'(x_n-t))_+dt=\int_{t^*}^{\tau}-U'(x_n-t)dt \\
	 	&=\int_{-t^*}^{-\tau}U'(x_n+s)ds= U(x_n-\tau)-U(x_n-t^*),
	 \end{aligned}
	\end{equation*}
	$\tau\in[t^*, x_n]$. Using the proposed sampler, we simulate several Beta distributions with varying $\alpha$ and $\beta$ parameters in Figure \ref{fig:beta:histograms}.
\end{itemize}
\end{example}

\begin{figure}[h]
\centering
\includegraphics[width=0.8\textwidth]{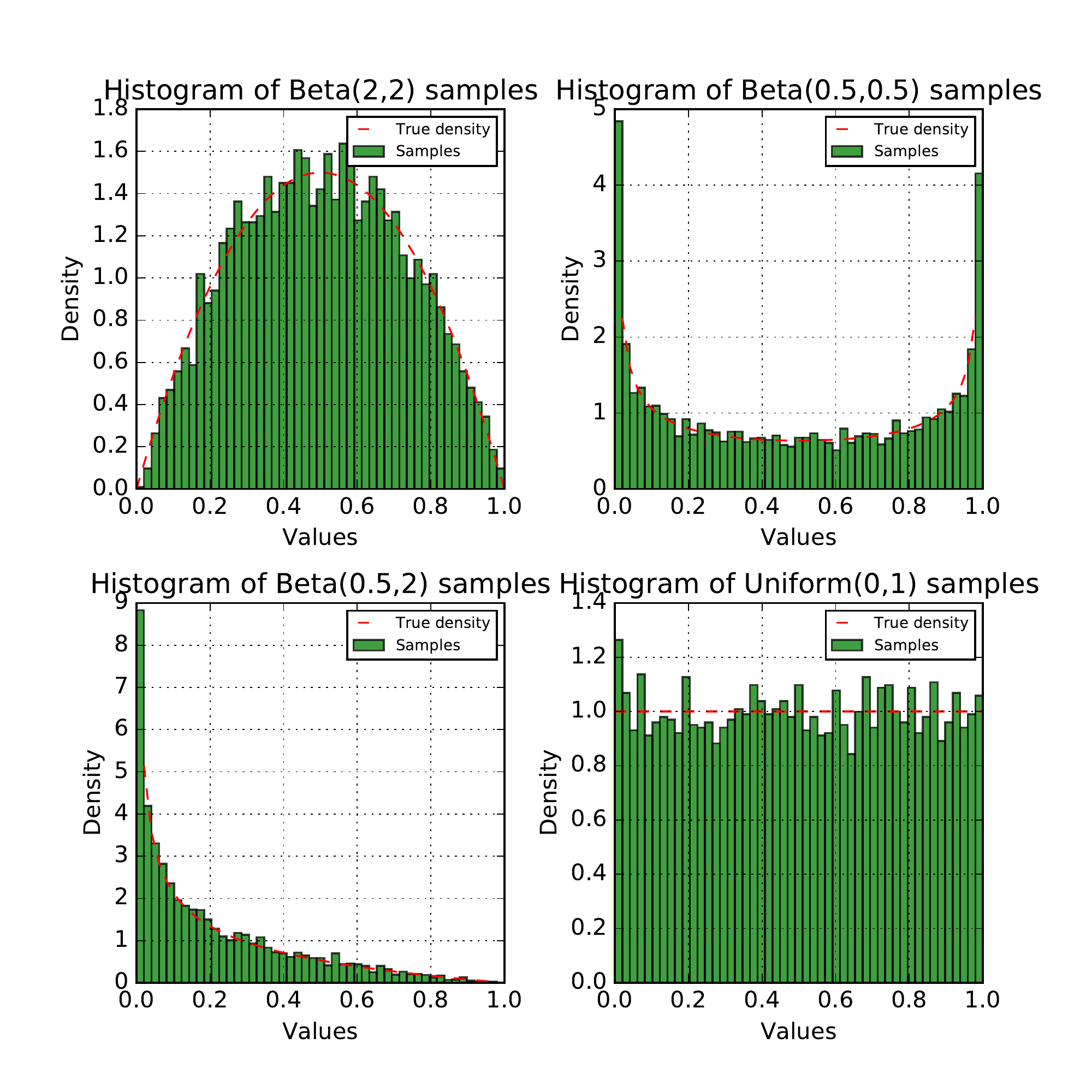}
\caption{Simulating Beta distributions.}
\label{fig:beta:histograms}
\end{figure}

Treating $v_n$ as fixed and $\tau$ as a function of a random variable $V_n\sim \textnormal{Unif}(0,1)$, $\tau$ becomes the first arrival time of an inhomogeneous Poisson process. $\Pi$ with intensity function $\lambda(t)$. The survival function of $\tau$ is then
\begin{equation*}
	\mathbb{P}\{\tau>t\}=\mathbb{P}\left\{\Pi\cap[0,t]=\emptyset\right\}=e^{-\int_0^t\lambda(s)ds}.
\end{equation*}
Our sampler requires having a sample from the distribution of $\tau$. In general it is not possible to obtain an analytic expression for $\tau$. \cite{bouchard2017bouncy, bierkens2016zig} describe the numerical ways to get a sample from the distribution of $\tau$ by e.g.~using adaptive thinning methods for sampling the first arrival time. We assume for now that we can find $\tau$ that solves \eqref{eq:tau:equation}.

\begin{remark1}\label{remark:discrete:vs:continuous:time}
After running the proposed discrete-time MCMC, the collected samples $x_n$, $n\geq 1$, are marginally from $\pi$. The previous works construct continuous-time Markov chains, providing a realization of $x(t)$ over the time interval $(0,T)$, where $T$ is the total time the Markov chain moves. They estimate the integral of interest $\int_{\mathbb{R}^d}\phi(x)\pi(dx)$ using $\frac{1}{T}\int_0^T\phi(x(t))dt$, which might not be tractable. In that case, to estimate the integral they evaluate $x(t)$ at equally-spaced and discrete time points $i\cdot \Delta t$, $i=0,\ldots,\lfloor T/\Delta t \rfloor=I$, obtaining an estimator $\frac{1}{I}\sum_{i=0}^I\phi(x(i\cdot \Delta t))$. This requires specifying a parameter $\Delta t$.
\end{remark1}

\section{The invariance of the target density} \label{sec:invariance}

We prove the target density is invariant with respect to the kernel corresponding to the proposed Markov chain. It suffices to prove the invariance in the univariate case, which is the setting of this section. Then the theory of unifying hit-and-run Markov chain method implies that the proposed MCMC is invariant for the target distributions in $d$-dimensions as well \citep{andersen2007hit}. 

Given a current position $x_n\in\mathbb{R}$, the chain moves to the right ($v_n=1$) or left ($v_n=-1$) with probability $\frac{1}{2}$.
\begin{itemize}[leftmargin=*]
	\item If the chain goes right, the next position is $x_n+\frac{\tau}{2}=x_{n+1}>x_n$, where $\tau$ solves $-\log V_n = \int_0^\tau(U'(x_n+t))_+dt$.
	\item If the chain goes left, the next position is $x_n-\frac{\tau}{2}=x_{n+1}<x_n$, where $\tau$ solves $-\log V_n=\int_0^{\tau}(-U'(x_n-t))_+dt$.
\end{itemize}
The kernel for the proposed Markov chain becomes
\begin{equation} \label{eq:kernel:general:U}
\begin{aligned}
	K(x_n,x_{n+1}) = &(U'(2x_{n+1}-x_n))_+\cdot e^{-\int_0^{2(x_{n+1}-x_n)}(U'(x_n+t))_+dt}\cdot \mathbb{I}_{\{x_{n+1}>x_n\}}\\
	+&(-U'(2x_{n+1}-x_n))_+\cdot e^{-\int_0^{2(x_n-x_{n+1})}(-U'(x_n-t))_+dt}\cdot \mathbb{I}_{\{x_{n+1}<x_n\}}.
\end{aligned}
\end{equation}

\begin{theorem}  \label{thm:stationarity}
The density proportional to $e^{-U(x)}$ is stationary for the kernel in \eqref{eq:kernel:general:U}  .
\end{theorem}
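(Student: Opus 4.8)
The plan is to verify the stationarity equation directly, i.e.\ to show $\int_{-\infty}^{\infty} e^{-U(x)} K(x,y)\,dx = e^{-U(y)}$ for the kernel in \eqref{eq:kernel:general:U}, writing $x=x_n$, $y=x_{n+1}$ (the normalizing constant cancels on both sides). The integral splits into a contribution $I_1(y)$ from right-moves (the $\mathbb{I}_{\{y>x\}}$ term, so $x<y$) and a contribution $I_2(y)$ from left-moves (the $\mathbb{I}_{\{y<x\}}$ term, so $x>y$). In each I would first rewrite the accumulated intensity by the substitution $s=x+t$ (resp.\ $s=x-t$), turning the exponent into an integral of $(U'(s))_+$ (resp.\ $(-U'(s))_+$) over the interval $[x,2y-x]$, which is symmetric about $y$; the factor $1/2$ in the update $y=x\pm\tau/2$ is exactly what makes this interval symmetric.

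Next I would apply the reflection substitution $z=2y-x$ in $I_2$, mapping the domain $x>y$ onto $z<y$ and placing $I_2$ on the same domain $(-\infty,y)$ as $I_1$. Setting $P(x)=\int_x^{2y-x}(U'(s))_+\,ds$ and $M(x)=\int_x^{2y-x}(-U'(s))_+\,ds$, the two integrands become $e^{-U(x)-P(x)}(U'(2y-x))_+$ and $e^{-U(2y-x)-M(x)}(-U'(x))_+$. The key algebraic step is the identity $P(x)-M(x)=\int_x^{2y-x}U'(s)\,ds=U(2y-x)-U(x)$ (using $(F)_+-(-F)_+=F$), which forces the two exponential weights to coincide, $e^{-U(2y-x)-M(x)}=e^{-U(x)-P(x)}$. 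Hence
\[
I_1(y)+I_2(y)=\int_{-\infty}^{y}e^{-U(x)-P(x)}\big[(U'(2y-x))_+ +(-U'(x))_+\big]\,dx.
\]

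The final step recognizes the integrand as a perfect derivative. With $\Phi(x)=U(x)+P(x)$, the Leibniz rule gives $P'(x)=-(U'(2y-x))_+-(U'(x))_+$, so that $\Phi'(x)=U'(x)-(U'(2y-x))_+-(U'(x))_+=-\big[(U'(2y-x))_+ +(-U'(x))_+\big]$, using $U'(x)-(U'(x))_+=-(-U'(x))_+$. Thus the integrand is $\tfrac{d}{dx}e^{-\Phi(x)}$ and the integral telescopes to $e^{-\Phi(y)}-\lim_{x\to-\infty}e^{-\Phi(x)}$. Since the symmetric interval collapses at $x=y$ we have $\Phi(y)=U(y)$, yielding the desired $e^{-U(y)}$; the boundary term vanishes because $P(x)\ge U(2y-x)-U(x)$ gives $\Phi(x)\ge U(2y-x)\to\infty$ as $x\to-\infty$ for any normalizable target. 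I expect the main obstacle to be spotting this exact-derivative structure---specifically the cancellation of the two exponential weights that lets the positive-part nonlinearities recombine into $-\Phi'(x)$; the bookkeeping of the reflection substitution and the positive/negative-part algebra is the delicate part, whereas the boundary estimate is routine.
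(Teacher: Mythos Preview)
Your argument is correct and rests on the same three ingredients as the paper's proof: the reflection substitution $x\mapsto 2y-x$ to put both pieces on $(-\infty,y)$, the identity $(U')_+-(-U')_+=U'$ to show the two exponential weights coincide, and the recognition of a perfect $x$-derivative that telescopes to $e^{-U(y)}$. The only methodological difference is that the paper first differentiates the stationarity identity in $y$ (so second derivatives of $U$ appear and the perfect-derivative trick is applied to $\tfrac{d}{dy}\!\int K(x,y)\pi(x)\,dx$), whereas you work directly with $\int K(x,y)\pi(x)\,dx$ and find the antiderivative $e^{-\Phi(x)}$ with $\Phi(x)=U(x)+P(x)$ straight away. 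Your route is a little more economical---it avoids introducing $U''$ and the extra bookkeeping that comes with it---while the paper's route has the minor advantage that matching derivatives sidesteps fixing the constant of integration (which you instead handle via $\Phi(y)=U(y)$). Both proofs silently assume the boundary contribution at $-\infty$ vanishes; your bound $e^{-\Phi(x)}\le e^{-U(2y-x)}$ makes this explicit under the mild tail condition $e^{-U(s)}\to 0$ as $s\to\infty$, which the paper leaves implicit.
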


\begin{proof}

Without the loss of generality we can assume $U(0)=0$. Since
\begin{equation*}
	\begin{aligned}
		\int_{-\infty}^{\infty}K(x_n,x_{n+1})\pi(x_n)dx_n
		&=\int_{-\infty}^{x_{n+1}}(U'(2x_{n+1}-x_n))_+\cdot e^{-\int_0^{2(x_{n+1}-x_n)}(U'(x_n+t))_+dt}\pi(x_n)dx_n\\
		&+\int_{x_{n+1}}^{\infty}(-U'(2x_{n+1}-x_n))_+\cdot e^{-\int_0^{2(x_n-x_{n+1})}(-U'(x_n-t))_+dt}\pi (x_n)dx_n,
	\end{aligned}
\end{equation*}
	differentiating the above with respect to $x_{n+1}$ we get
\begin{equation*}
\begin{aligned}
	&\pi(x_{n+1})(U'(x_{n+1}))_+-\pi(x_{n+1})(-U'(x_{n+1}))_+\\
	&\qquad+\int_{-\infty}^{x_{n+1}}\frac{d}{dx_{n+1}}\left((U'(2x_{n+1}-x_n))_+\cdot e^{-\int_0^{2(x_{n+1}-x_n)}(U'(x_n+t))_+dt}\right)\pi(x_n)dx_n \\
	&\qquad+\int_{x_{n+1}}^{\infty}\frac{d}{dx_{n+1}}\left((-U'(2x_{n+1}-x_n))_+\cdot e^{-\int_0^{2(x_n-x_{n+1})}(-U'(x_n-t))_+dt}\right)\pi(x_n)dx_n.
\end{aligned}
\end{equation*}
To prove the invariance of density $\pi(x)\:\propto\:e^{-U(x)}$ with respect to the given kernel, we need to prove
\begin{equation*}
	\pi(x_{n+1})=\int_{-\infty}^{\infty}K(x_n, x_{n+1})\pi(x_n)dx_n.	
\end{equation*}
Since $(U'(x))_+-(-U'(x))_+=U'(x)$ and $\pi'(x)=-U'(x)\pi(x)$, it suffices to show
\begin{equation} \label{eq:main:theorem:suff:condition}
\begin{aligned}
	&-2U'(x_{n+1})e^{-U(x_{n+1})} \\
	&=\int_{-\infty}^{x_{n+1}}\frac{d}{dx_{n+1}}\left((U'(2x_{n+1}-x_n))_+\cdot e^{-\int_0^{2(x_{n+1}-x_n)}(U'(x_n+t))_+dt}\right)e^{-U(x_n)}dx_n \\
	&\qquad +\int_{x_{n+1}}^{\infty}\frac{d}{dx_{n+1}}\left((-U'(2x_{n+1}-x_n))_+\cdot e^{-\int_0^{2(x_n-x_{n+1})}(-U'(x_n-t))_+dt}\right)e^{-U(x_n)}dx_n.
\end{aligned}
\end{equation}
Denoting $(U'(x))_+=u_1(x)$ and $-(-U'(x))_+=u_2(x)$, the RHS of \eqref{eq:main:theorem:suff:condition} equals for all $x_{n+1}$ to
\begin{equation*}
\begin{aligned}
	&\int_{-\infty}^{x_{n+1}}\frac{d}{dx_{n+1}}\left(u_1(2x_{n+1}-x_n)\cdot e^{-\int_0^{2(x_{n+1}-x_n)}u_1(x_n+t)dt}\right) e^{-U(x_n)}dx_n\\
	&\qquad +\int_{x_{n+1}}^{\infty}\frac{d}{dx_{n+1}}\left(-u_2(2x_{n+1}-x_n)\cdot e^{-\int_0^{2(x_n-x_{n+1})}(-u_2(x_n-t))dt}\right)e^{-U(x_n)}dx_n \\
	&=\int_{-\infty}^{x_{n+1}}\frac{d}{dx_{n+1}}\left(u_1(2x_{n+1}-x_n)\cdot e^{-\int_0^{2(x_{n+1}-x_n)}u_1(x_n+t)dt}\right)e^{-U(x_n)}dx_n \\
	&\qquad +\int_{-\infty}^{x_{n+1}}\frac{d}{dx_{n+1}}\left(-u_2(x_n)\cdot e^{-\int_0^{2(x_{n+1}-x_n)}(-u_2(2x_{n+1}-x_n-t))dt}\right)e^{-U(2x_{n+1}-x_n)}dx_n,
\end{aligned}
\end{equation*}
where in the second equality we did the change of variables $x_n\rightarrow 2x_{n+1}-x_n$ for the second integral. 
Since $U(x_n)=\int_0^{x_n}u_1(t)dt+\int_0^{x_n}u_2(t)dt$, the expression above becomes equal to
\begin{equation*}
\begin{aligned}
    &2\int_{-\infty}^{x_{n+1}}(u_1'(2x_{n+1}-x_n)-u_1^2(2x_{n+1}-x_n))e^{-\int_0^{2x_{n+1}-x_n}u_1(t)dt-\int_0^{x_n} u_2(t)dt}dx_n \\
	&\qquad +2\int_{-\infty}^{x_{n+1}}(-u_2'(x_n)+u_2^2(x_n))e^{-\int_0^{x_n}u_2(t)dt -\int_0^{2x_{n+1}-x_n}u_1(t)dt}dx_n \\
	&=2\int_{-\infty}^{x_{n+1}}(u_1'(2x_{n+1}-x_n)-u_2'(x_n))e^{-\int_0^{2x_{n+1}-x_n}u_1(t)dt-\int_0^{x_n} u_2(t)dt}dx_n \\
	&\qquad +2\int_{-\infty}^{x_{n+1}}(-u_1^2(2x_{n+1}-x_n)+u_2^2(x_n))e^{-\int_0^{2x_{n+1}-x_n}u_1(t)dt -\int_0^{x_n} u_2(t)dt}dx_n. 
\end{aligned}	
\end{equation*}
Denoting $\kappa(x_n)=-u_1(2x_{n+1}-x_n)-u_2(x_n)$ and $g(x_n)=e^{-\int_0^{2x_{n+1}-x_n}u_1(t)dt -\int_0^{x_n} u_2(t)dt}$, the above becomes
\begin{equation*}
\begin{aligned}
	&2\int_{-\infty}^{x_{n+1}}\kappa'(x_n)g(x_n)dx_n+2\int_{-\infty}^{x_{n+1}}\kappa(x_n)g'(x_n)dx_n
	=2\int_{-\infty}^{x_{n+1}}(\kappa(x_n)g(x_n))'dx_n \\
	&=2\kappa(x_{n+1})g(x_{n+1})=-2(u_1(x_{n+1})+u_2(x_{n+1}))e^{-\int_0^{x_{n+1}}u_1(t)dt-\int_0^{x_{n+1}}u_2(t)dt}\\
	&=-2U'(x_{n+1})e^{-U(x_{n+1})},
\end{aligned}
\end{equation*}
finishing the proof.
\end{proof}


\begin{remark1}
	In order to apply hit-and-run framework and get the invariance of our target distribution under the proposed chain, we do not necessarily need to draw $v_n$ from a uniform distribution on $\mathbb{S}^{d-1}$ at each update step (or, for all $n$). It suffices to keep $v_n$ along the same line for a fixed number of steps and only multiply it with $+1$ or $-1$ with probabilities 0.5 each. As long as the number of steps we keep the velocity in the same line is not dependent on the chain positions so far, the target distribution stays invariant.
\end{remark1}


\section{Easier sampling via decomposition} \label{sec:sampling:via:decomposition}

This section presents a slightly different, but similar in nature, sampler in the univariate case ($d=1$) which is easier to implement when a particular representation of the log-density is available.
Assume that $U$ can be decomposed as the sum $U=U_1+U_2$ with $U_1:\mathbb{R}\rightarrow\mathbb{R}$ an increasing function and $U_2:\mathbb{R}\rightarrow\mathbb{R}$ a decreasing function. Every bounded variation function on a real line can be decomposed in such a way. However, we do not always have the explicit forms for $U_1$ and $U_2$. If $U_1$ and $U_2$ can be computed efficiently, we introduce a new sampler that avoids sampling of the first arrival times of the Poisson process.

In terms of the iterated random functions framework, the family $\{f_{\theta}\mid\theta\in \Theta\}$ of transition functions corresponding to this Markov chain is different compared to the sampler in Section \ref{sec:sampler}. Even though the form \eqref{eq:irf} stays the same, $\tau$ as a function of $v_n, V_n$ and $x_n$ changes. The probability measure $\mu$ on $\Theta=(0,1)\times \{-1,1\}$ stays the same.
To describe the amount of move, we differentiate between two cases depending if $v_n$ is positive or negative. For the new Markov chain $\tau$ becomes
\begin{equation} \label{eq:tau:decomposition}
	\tau = \begin{cases}
 	U_1^{-1}(U_1(x_n)-\log V_n)-x_n & \textnormal{ for }\; v_n=1 \\
 	x_n-U_2^{-1}(U_2(x_n)-\log V_n) & \textnormal{ for }\; v_n=-1
 \end{cases}	
\end{equation}

Combining the two cases above, the proposed Markov chain has the following kernel
\begin{equation} \label{eq:kernel:via:decomposition}
\begin{aligned}
	K(x_n,x_{n+1})&=\mathbb{I}_{\{x_{n+1}>x_n\}}\left(U_1'(2x_{n+1}-x_n)e^{-U_1(2x_{n+1}-x_n)+U_1(x_n)}\right) \\
	&+\mathbb{I}_{\{x_{n+1}<x_n\}}\left(-U_2'(2x_{n+1}-x_n)e^{-U_2(2x_{n+1}-x_n)+U_2(x_n)}\right).
\end{aligned}
\end{equation}

\begin{theorem} \label{thm:stationarity:decomposition}
	The univariate target density $\pi(x)\:\propto\: e^{-U(x)}=e^{-U_1(x)-U_2(x)}$, $x\in\mathbb{R}$, where $U_1$ is an increasing function and $U_2$ is a decreasing function, is stationary for the kernel proposed in \eqref{eq:kernel:via:decomposition}.
\end{theorem}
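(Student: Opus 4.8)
The plan is to mirror the proof of Theorem~\ref{thm:stationarity} almost verbatim, exploiting the observation that the decomposition kernel \eqref{eq:kernel:via:decomposition} has exactly the algebraic shape of \eqref{eq:kernel:general:U} once we set $u_1 := U_1'$ and $u_2 := U_2'$. Because $U_1$ is increasing and $U_2$ is decreasing we have $u_1 \geq 0$ and $u_2 \leq 0$, so the two indicator branches of \eqref{eq:kernel:via:decomposition} are nonnegative and play the roles of the positive and negative parts that appeared in Theorem~\ref{thm:stationarity}. The essential simplification is that the inner integrals now telescope without any truncation: $\int_0^{2(x_{n+1}-x_n)} u_1(x_n+t)\,dt = U_1(2x_{n+1}-x_n) - U_1(x_n)$, which is precisely the exponent already written into the kernel. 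Thus the only real content is to check that the same pair of identities, $u_1 + u_2 = U'$ and $U = U_1 + U_2$, drives the computation through.

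Concretely, I would first write the stationarity condition $\int_{-\infty}^{\infty} K(x_n, x_{n+1})\pi(x_n)\,dx_n = \pi(x_{n+1})$, splitting the integral at $x_n = x_{n+1}$ according to the two indicators, and then differentiate both sides with respect to $x_{n+1}$. The Leibniz boundary terms collected at $x_n = x_{n+1}$ yield $U_1'(x_{n+1})\pi(x_{n+1}) + U_2'(x_{n+1})\pi(x_{n+1}) = U'(x_{n+1})\pi(x_{n+1})$, while the right-hand side contributes $\pi'(x_{n+1}) = -U'(x_{n+1})\pi(x_{n+1})$. Equating the two reduces the claim to showing that the remaining integral-of-derivative terms sum to $-2U'(x_{n+1})e^{-U(x_{n+1})}$, exactly the sufficient condition \eqref{eq:main:theorem:suff:condition}.

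Next I would apply the substitution $x_n \mapsto 2x_{n+1} - x_n$ to the integral over $(x_{n+1}, \infty)$, turning it into an integral over $(-\infty, x_{n+1})$ so the two pieces merge. Carrying out the $x_{n+1}$-derivatives with the chain rule (the inner argument contributes a factor of $2$) produces integrands of the form $2\bigl(u_1' - u_1^2\bigr)(2x_{n+1}-x_n)$ and $2\bigl(-u_2' + u_2^2\bigr)(x_n)$, each multiplied by $e^{-U_1(2x_{n+1}-x_n) - U_2(x_n)}$ after the cancellation of $U_1(x_n)$, respectively $U_2(x_n)$, against the corresponding factor in $\pi(x_n) = e^{-U_1(x_n)-U_2(x_n)}$. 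This is line-for-line the intermediate expression in the proof of Theorem~\ref{thm:stationarity}. Setting $\kappa(x_n) = -u_1(2x_{n+1}-x_n) - u_2(x_n)$ and $g(x_n) = e^{-\int_0^{2x_{n+1}-x_n} u_1 - \int_0^{x_n} u_2}$, the integrand becomes the exact derivative $2(\kappa g)'$, so the integral evaluates to $2\kappa(x_{n+1})g(x_{n+1}) = -2U'(x_{n+1})e^{-U(x_{n+1})}$, which is what we needed.

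The step I expect to be the main obstacle is justifying the two regularity points that the telescoping argument quietly relies on. First, differentiating under the integral sign and applying Leibniz's rule requires $u_1' = U_1''$ and $u_2' = U_2''$ to exist and the integrands to be locally dominated, so beyond the standing $C^1$ hypothesis on $U$ one needs $U_1, U_2$ to be (piecewise) $C^2$, matching the implicit assumption already made in Theorem~\ref{thm:stationarity}. Second, one must verify the vanishing of the boundary term $\lim_{x_n \to -\infty} \kappa(x_n)g(x_n) = 0$; this holds because, as $x_n \to -\infty$, monotonicity forces $U_1(2x_{n+1}-x_n) \to +\infty$ and $U_2(x_n) \to +\infty$, so $g(x_n) \to 0$ exponentially, and provided $\kappa$ grows sub-exponentially (guaranteed by integrability of $\pi$ together with mild tail control on $U_1', U_2'$) the product vanishes. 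Once these two checks are in place, the algebra is identical to the previous theorem.
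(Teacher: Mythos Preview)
Your proposal is correct and follows essentially the same route as the paper's proof: split the stationarity integral at $x_n=x_{n+1}$, differentiate in $x_{n+1}$, reduce to the analogue of \eqref{eq:main:theorem:suff:condition}, reflect the second integral via $x_n\mapsto 2x_{n+1}-x_n$, and then recognise the combined integrand as $2(\kappa g)'$ with the same $\kappa,g$ (the paper writes $g(u)=e^{-U_1(2x_{n+1}-u)-U_2(u)}$ directly rather than via antiderivatives, which is equivalent up to a constant). Your additional remarks on the $C^2$ regularity needed for differentiation under the integral and on the vanishing of $\kappa g$ at $-\infty$ are genuine points that the paper's argument leaves implicit.
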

The proof of this theorem is similar to the proof of Theorem \ref{thm:stationarity} and is given in Appendix \ref{app:proofs}.

We illustrate the computational simplification of the proposed Markov chain that uses the decomposition in two examples. 

\begin{example}[Mixture of two Gaussians]
We sample from a mixture of two univariate Gaussians using the Markov chain proposed in this section. The target probability density has the weights $w_1>0$ on $\mathcal{N}(\mu_1,\sigma^2)$ and $w_2=1-w_1$ on $\mathcal{N}(\mu_2,\sigma^2)$, $\mu_2>\mu_1$, equalling
\begin{equation*}
\begin{aligned}
	\pi(x) & \:\propto\: w_1\cdot e^{-\frac{1}{2\sigma^2}(x-\mu_1)^2}+w_2\cdot e^{-\frac{1}{2\sigma^2}(x-\mu_2)^2}\\
	&=\exp\left(-\frac{1}{2\sigma^2}(x-\mu_1)^2+\ln\left(w_1+w_2\cdot e^{\frac{1}{2\sigma^2}\left(2x(\mu_2-\mu_1)+\mu_1^2-\mu_2^2\right)}\right)\right).
\end{aligned}
\end{equation*}
Thus, $\pi(x)=e^{-U_1(x)-U_2(x)}$ for an increasing function $U_1(x)=\frac{1}{2\sigma^2}(x-\mu_1)^2\mathbb{I}_{\{x>\mu_1\}}$ and a decreasing function $U_2(x)=\frac{1}{2\sigma^2}(x-\mu_1)^2\mathbb{I}_{\{x<\mu_1\}}-\ln\left(w_1+w_2\cdot e^{\frac{1}{2\sigma^2}\left(2x(\mu_2-\mu_1)+\mu_1^2-\mu_2^2\right)}\right)$. Figure \ref{fig:gaussian:mixture} illustrates the validity of this approach.
\end{example}

\begin{example}[Beta distribution via decomposition] \label{example:beta:decomposed} 
	We describe the new Markov chain in the case of $Beta(\alpha,\beta)$ distribution when both $\alpha$ and $\beta$ are greater than 1. Other cases are considered in Appendix \ref{app:examples}.
	The negative log-density of $Beta(\alpha,\beta)$ decomposes as
	$U(x)=U_1(x)+U_2(x)$, $x\in(0,1)$, for an increasing function $U_1(x)=-(\beta-1)\log(1-x)$ with $U_1^{-1}(y)=1-e^{-\frac{y}{\beta-1}}$ and a decreasing function $U_2(x)=-(\alpha-1)\log x$ with $U_2^{-1}(y)=e^{-\frac{y}{\alpha-1}}$. Given the random parameters $(V_n, v_n)\in (0,1)\times\{-1,1\}$, $\tau$ becomes
	\begin{equation*}
		\tau = \begin{cases} U_1^{-1}(U_1(x_n)-\log V_n)-x_n\in (0,1-x_n) & \textnormal{ for } v_n=1 \\ x_n-U_2^{-1}(U_2(x_n)-\log V_n)\in (0,x_n) & \textnormal{ for } v_n=-1
		\end{cases}.
	\end{equation*}
	The update point becomes $x_n+\frac{1}{2}\tau$ for $v_n=1$ and  $x_n-\frac{1}{2}\tau$ for $v_n=-1$.
\end{example}

\begin{figure}[h]
\centering
\includegraphics[width=0.6\textwidth]{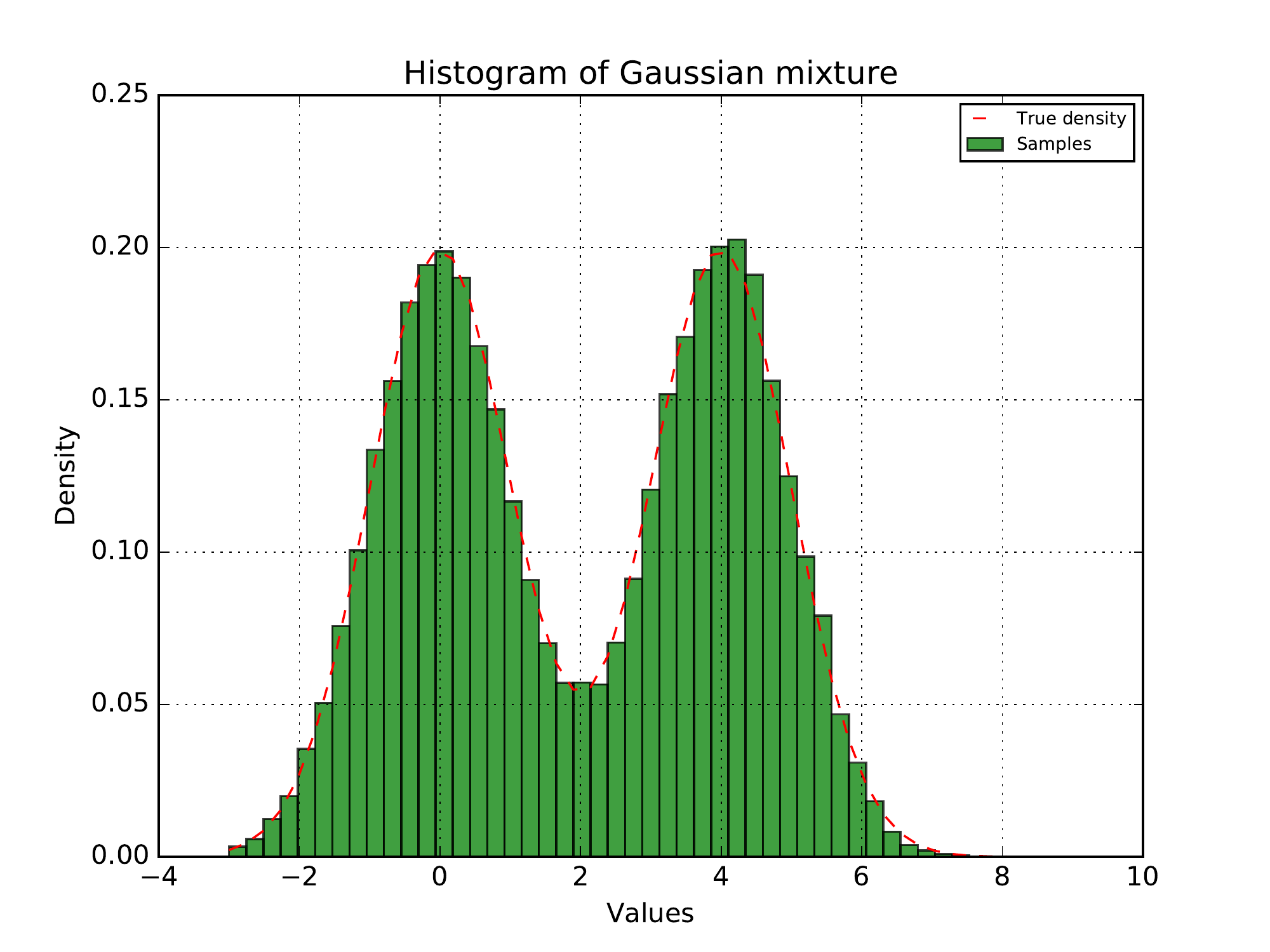}
\caption{Simulating mixture of two Gaussians $\frac{1}{2}\mathcal{N}(0,1)+\frac{1}{2}\mathcal{N}(4,1)$.}
\label{fig:gaussian:mixture}
\end{figure}

\begin{remark1}
	Having $v_n$ fixed at 1 or -1 and taking into account randomness in $V_n\sim\textnormal{Unif}(0,1)$, $\tau$ defined in \eqref{eq:tau:decomposition} represents the first arrival time of a corresponding Poisson process. We differentiate between two Poisson processes depending whether $v_n$ is positive or negative. 
	\begin{itemize}
		\item For $v_n=1$, $\tau$ is the first arrival time of the PP with intensity function $\lambda(t)=U_1'(x_n+t)$
		\item For $v_n=-1$, $\tau$ is the first arrival time of the PP with intensity function $\lambda(t)=U_2'(x_n-t)$.
	\end{itemize}  
\end{remark1}

\section{Selective sampler after LASSO} \label{sec:application}

\subsection{Background}

Practitioners often perform some algorithms on their data (selection) and choose their models and hypothesis to test (inference) upon seeing the outcomes of those algorithms. This problem, colloquially called ``data snooping'' or ``cherry picking,'' does not fit into the framework of classical statistics which assumes the models and hypothesis are fixed in advanced, before seeing the data. Ignoring the selection leads of inflated $p$-values and confidence intervals with poor coverage. Unless properly adjusted, the classical $p$-values and confidence intervals are no longer valid.
Selective inference solves the problem of providing valid $p$-values and confidence intervals after performing a model selection algorithm on the data. The field started with the works of \cite{lee2013exact, lee_screening, sequential_post_selection}, where the authors constructed a pivotal quantity that is $\textnormal{Unif}(0,1)$ distributed after selection. Using the constructed test statistic, they provide valid $p$-values and confidence intervals for the variables selected using LASSO, marginal screening and sequential procedures, respectively.

Since their approach can suffer from low statistical power, the randomized selective inference, starting with \cite{tian2015selective}, solves a randomized model selection problem to select variables of interest. Then it provides inference for only the selected coefficients by calibrating the corresponding test statistics using the conditional distribution of the data, where conditioning is on the selected model. This randomized approach will be our focus here. For further background on selective inference see the review papers \cite{taylor2015statistical, bi2017inferactive}.

Before describing the use of our sampler here, we illustrate the randomized selective inference approach through a randomized LASSO example.
The data consists of the design matrix $X\in\mathbb{R}^{n\times p}$ with rows $x_i\in\mathbb{R}^p$, $i=1,\ldots,n$, and the response vector $y=(y_1,\ldots, y_n)\in\mathbb{R}^n$. We assume $(x_i,y_i)$, $i=1,\ldots, n$, are i.i.d.~from a distribution $\mathbb{F}$.
Given data $(X,y)$ and a sample $\omega\sim\mathbb{G}$ with density $g$ from a pre-specified distribution $\mathbb{G}$, the randomized LASSO solves the following convex objective
\begin{equation} \label{eq:randomized:lasso}
	\underset{\beta\in\mathbb{R}^p}{\textnormal{minimize}}\:\frac{1}{2}\|y-X\beta\|_2^2+\lambda\|\beta\|_1+\frac{\epsilon}{2}\|\beta\|_2^2-\omega^T\beta, \;\; ((X,y),\omega)\sim\mathbb{F}^n\times\mathbb{G},
\end{equation}
where $\lambda$ is the $\ell_1$-penalty level and $\epsilon>0$ is a ridge term, usually taken to be a small constant, ensuring the solution of the above objective exists \citep{selective_sampler}. Denote the solution of  \eqref{eq:randomized:lasso} as $\hat{\beta}=\hat{\beta}(X,y,\omega)$. Since the randomized LASSO objective produces a sparse solution we denote the non-zero coordinates of $\hat{\beta}$ as $E$, writing $\hat{\beta}=\begin{pmatrix} \hat{\beta}_E \\ 0 \end{pmatrix}$. The set $E$ determines the set of selected variables which are of interest as related to our response. Given the selected set $E$, we decide to report the $p$-values and confidence intervals for the populations coefficients $\beta_{E,j}*$, $j\in E$, corresponding to the selected coefficients, where 
\begin{equation*}
	\beta_E^*
	=\left(\mathbb{E}_{\mathbb{F}^n}[X_E^TX_E]\right)^{-1}\mathbb{E}_{\mathbb{F}^n}[X_E^Ty]
	=\left(\mathbb{E}_{\mathbb{F}}[x_{i,E}x_{i,E}^T]\right)^{-1}\mathbb{E}_{\mathbb{F}}[x_{i,E}y_i] 
\end{equation*}
with $X_E$ denoting the sub-matrix of $X$ consisting of columns in $E$.
Assuming $E$ is fixed and not chosen based on data, we can construct the $p$-values and confidence intervals for each of the $\beta_{E,j}^*$, $j\in E$, coordinates using the asymptotic normality of the least-squares estimator $\bar{\beta}_E=(X_E^TX_E)^{-1}X_E^Ty$. 
Since $E$ is random, i.e.~chosen based on data, the classical (which we also call ``naive'') $p$-values and confidence intervals based on normal quantiles will no longer be valid. Using selective inference approach, we also base inference using $\bar{\beta}_E$ as a test statistic; however, its distribution under the null is no longer normal but normal distribution conditional on the event that the randomized LASSO chose set $E$. Precisely, the conditioning is on set that $(X,y,\omega)$ belongs to the set
\begin{equation} \label{eq:conditioning:event}
	\left\{(X',y',\omega'):\hat{\beta}(X',y',\omega')_{-E}\neq 0, \hat{\beta}(X',y',\omega')_{-E}=0\right\}.
\end{equation}

To get a handle on the corresponding distribution of $\bar{\beta}_E$ conditional on the event \eqref{eq:conditioning:event}, we rely on the change of measure approach of \cite{selective_sampler}. Note that the conditioning event is written in terms of the whole data $(X,y)$ and randomization vector $\omega$. However, the LASSO solution depends on $\omega$ and a data vector $D=\begin{pmatrix} \bar{\beta}_E \\ X_{-E}^T(y-X_E\bar{\beta}_E)\end{pmatrix}$ as follows. Denoting with $u_{-E}$ is the inactive subgradient of $\ell_1$-penalty in \eqref{eq:randomized:lasso}, we write the KKT conditions of the randomized LASSO objective as 
\begin{equation} \label{eq:KKT}
	\omega = \omega(D,\hat{\beta}_E,u_{-E})
	=-\begin{pmatrix} X_E^TX_E & 0 \\ X_E^TX_{-E} & I_{p-|E|}\end{pmatrix}D +\begin{pmatrix}
		X_E^TX_E+\epsilon I_{|E|} \\ X_{-E}^TX_E \end{pmatrix}\hat{\beta}_E+\begin{pmatrix}\lambda s_E \\ u_{-E}\end{pmatrix}, 
\end{equation}
with the constraints $\|u_{-E}\|_{\infty}<\lambda$ and $\textnormal{sign}(\hat{\beta}_E)=s_E$, where $I_k$ denotes the identity matrix of the dimension $k$. We treat vector $s_E$ as fixed at its observed value, i.e.~we condition on the sign vector to simplify the selection region and get the log-concave selective density.
Under mild conditions on $\mathbb{F}$, vector $D$ is asymptotically normal $\mathcal{N}(\mu_D,\Sigma_D)$ treating $E$ as fixed and not chosen based on data. We treat the matrices in the randomization reconstruction in \eqref{eq:KKT} as fixed which is justified by the Strong law of large numbers. Then the solution of LASSO depends only on $D$ and $\omega$. Instead of writing the selective density on $D$ and $\omega$ with a complicated set of affine constraints describing the set that solving LASSO on $(D,\omega)$ gives the set $E$, \cite{selective_sampler} write the selective density in terms of $(D,\hat{\beta}_E, u_{-E})$ with simple constraints on so called optimization variables $\hat{\beta}_E$ and $u_{-E}$. Thus the selective density on $(D,\hat{\beta}_E,u_{-E})$ becomes
\begin{equation} \label{eq:selective:density}
	\phi_{(\mu_D,\Sigma_D)}(D)\cdot g(\omega(D,\hat{\beta},u_{-E}))
\end{equation}
with constraints $\textnormal{sign}(\hat{\beta}_E)=s_E$ and $\|u_{-E}\|_{\infty}<\lambda$, where $\phi_{(\mu,\Sigma)}(\cdot)$ denotes the density of normal $\mathcal{N}(\mu,\Sigma)$ distribution. Note that the constraints on the optimization variables are restricting $\hat{\beta}_E$ to a particular orthant and $u_{-E}$ to a cube. The density in \eqref{eq:selective:density} is much simpler to sample from than the density on $(D,\omega)$ with the constraints on both $D$ and $\omega$.

To construct $p$-values and confidence intervals for all coordinates $\beta_E^*$, $j\in E$, we first sample using our proposed Markov chain only the optimization variables from the density in \eqref{eq:selective:density} while keeping the data vector $D$ fixed at its observed value. Then we do importance sampling as described in \cite{bootstrap_mv}. This allows us to run the MCMC sampler only once in order to construct all $|E|$ $p$-values and confidence intervals.
To prove validity of this approach, the distribution $\mathbb{G}$ is usually taken to be Gaussian or heavy-tailed such as Laplace or logistic, thus log-concave \citep{tian2015selective, bootstrap_mv}. We sample the optimization variables from the constrained density in \eqref{eq:selective:density} with data vector $D$ fixed at its observed value using our proposed sampler.

\subsection{Sampling from a truncated log-concave density} \label{sec:truncated:log:concave}

We describe sampling from a truncated log-concave density using our proposed sampler in Section \ref{sec:sampler}. Let the density be proportional to $e^{-U(x)}$, $x\in D\subset\mathbb{R}^d$, with $U$ a convex function. Given the current Markov chain position $x_n$ and the random parameters $(V_n,v_n)$, computing $\tau$ consists of the following.
\begin{enumerate}[leftmargin=*]
	\item Compute $t_{\min}=\min\{t\geq 0:x_n+v_nt\in D\}$ and $t_{\max}=\max\{t\geq 0:x_n+v_nt\in D\}$. 
	\item Solve $t^*=\textnormal{arg}\:\underset{t\in[t_{\min}, t_{\max}]}{\min}\: U(x_n+v_nt)$. 
	\item If $U(x_n+v_nt_{\max})-U(x_n+v_nt^*)+\log V_n>0$, then $\tau\in(t^*, t_{\max})$ solves
	\begin{equation*}
		-\log V_n=U(x_n+v_n\tau)-U(x_n+v_nt^*).	
	\end{equation*}
	Otherwise, set $\tau = t_{\max}$.
\end{enumerate}

We apply the algorithm above to sample from an univariate truncated Gaussian distribution (Figure \ref{fig:truncated:gaussian}).

\begin{figure}[h]
\centering
\includegraphics[width=0.7\textwidth]{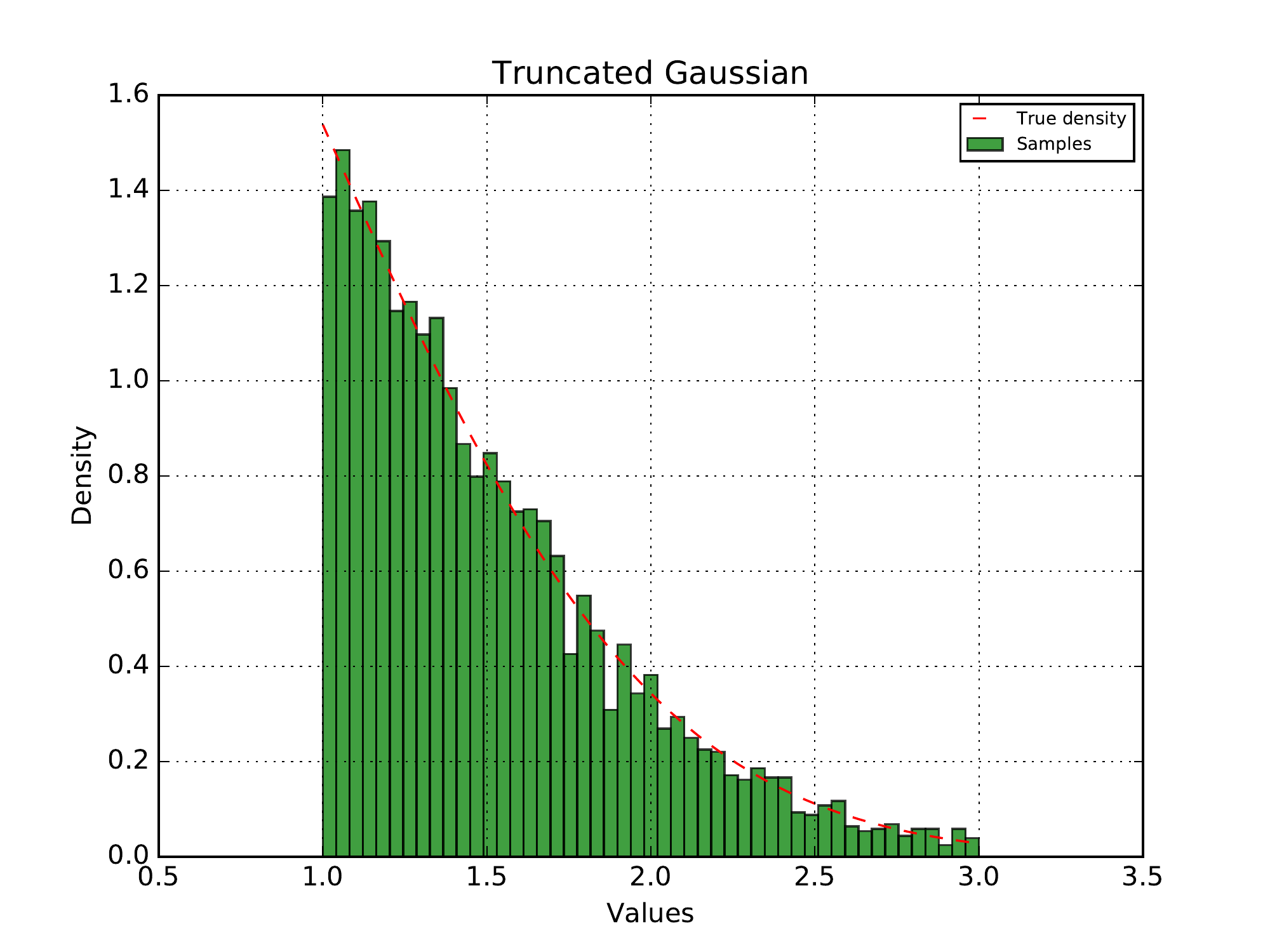}
\caption{Simulating Gaussian $\mathcal{N}(0,1)$ distribution truncated to $[1,3]$.}
\label{fig:truncated:gaussian}
\end{figure}

\subsection{Selective inference results }

We apply the sampler outlined above for sampling the optimization variables from the selective density in \eqref{eq:selective:density}. In this example the constraint set $D$ is a product of orthants and cubes, hence we have explicit solutions for $t_{\min}$ and $t_{\max}$.
Before presenting the results, let us describe the simulation details. The design matrix $X$ is generated from equi-correlated model, i.e.~the rows $x_i$, $i=1,\ldots, n$, are generated i.i.d.~from $\mathcal{N}(0,\Sigma)$, $\Sigma_{i,j}=\begin{cases} 1 & i=j \\ \rho & i\neq j\end{cases}$. We further normalize the columns of $X$ to have norm 1. The response $y$ is generated as $\mathcal{N}(0,I_n)$ and is independent of $X$. Note that we are in the global null setting, i.e.~$\beta_E^*$ is a zero vector for any selected set $E$. Denoting the empirical standard deviation of $y$ as $\hat{\sigma}$, we take $\epsilon=\frac{\hat{\sigma}^2}{\sqrt{n}}$ and the randomization scale $s=\frac{1}{2}\hat{\sigma}$. In the case of Gaussian randomization, $\omega\sim\mathcal{N}(0,s^2)$ and in the case of Laplace randomization $\omega\sim \textnormal{Laplace}(\textnormal{loc}=0,\textnormal{scale}=s)$. 
We set the dimensions to $n=100, p=40$, the correlation among the predictors $\rho=0.3$ and the penalty level $\lambda=1.4$ in simulations.

After generating the data as above, randomized LASSO objective in \eqref{eq:randomized:lasso} gives an active set $E$. Then the report consists only the $p$-values and confidence intervals corresponding to this selected set. We construct the $p$-values corresponding to this set using both the proposed sampler (MC via IRF) and the projected Langevin sampler. The latter has been used for sampling from truncated log-concave densities and its theoretical guarantees are presented in \cite{bubeck2015sampling}. We include the naive $p$-values and confidence intervals as well, further emphasizing the point that they do not preserve the targeted type I-error rate and coverage. We iterate over this process 100 times. 

Figure \ref{fig:lasso:example} and Table \ref{table:lasso:example} present our results. From the empirical CDF plots of $p$-values we see that the selective $p$-values are uniform when we use our sampler while far from uniform when we use the projected Langevin sampler for sampling the optimization variables.
The coverages of the constructed confidence intervals using the proposed sampler are preserved at the target level of $90\%$, while naive confidence intervals and the selective ones using projected Langevin MCMC are significantly under-covering the zero vector $\beta_E^*$. We run the projected Langevin for more number of steps taken in the respective chains to make the time between the two comparable. The results imply the Markov chain via IRF converges faster to the target distribution than the projected Langevin MC.

\begin{figure}[h] 
\centering 
\includegraphics[width=0.9\textwidth]{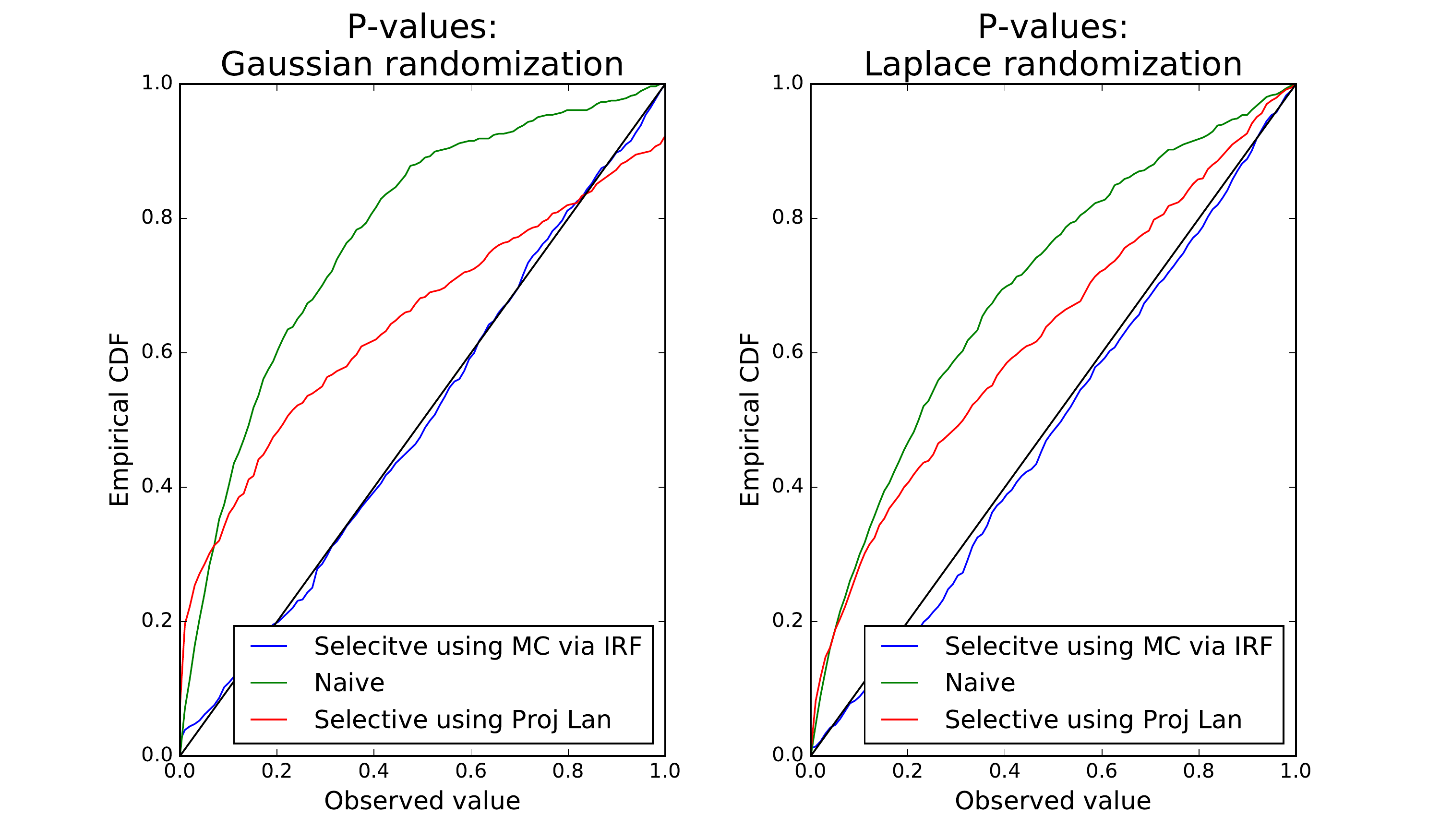}
\caption{Empirical CDF of selective and naive $p$-values corresponding to the selected set $E$ chosen based on randomized LASSO.}
\label{fig:lasso:example}
\end{figure}

\begin{table}[h!] 
\centering
 \begin{tabular}{||c| c c c |c c c c||} 
 \hline
     	& \multicolumn{3}{c}{\textit{Gaussian randomization}} & \multicolumn{3}{c}{\textit{Laplace randomization}} & \\
     	& coverage & $\#$ steps & time & coverage & $\#$ steps & time & \\ [0.5ex] 
 \hline \hline
 Markov chain via IRF & 89.59 & 1,000 & 73 &  91.39 & 1,000 & 143 & \\
 Projected Langevin   & 56.22 & 3,000 & 86 &  71.76 & 4,000 & 176 & \\
 Naive intervals	  & 59.97 & na & na & 70.31 & na & na & \\ 
 \hline
 \end{tabular}
 \caption{Confidence intervals coverages (in $\%$), where the targeted coverage is $90\%$. We also include the number of steps taken in each Markov chain and the total time (in seconds) of the program.}
 \label{table:lasso:example}
\end{table}

\section{Conclusion}

This work introduces a new discrete-time MCMC sampler, proving the target distribution is invariant for the corresponding kernel. At every step, it requires computing the first arrival time of the appropriate Poisson process. Computational simplifications are provided in specific univariate examples including some multimodal distributions. In high-dimensions, the method offers enormous computational improvement when used in selective inference. There remains the question of how fast the sampler mixes in the general case.

\section*{Acknowledgements} The authors would like to thank Persi Diaconis for useful comments and pointers to the literature on iterated random functions, and Jonathan Taylor for helpful feedback on application to selective inference.
\begin{appendices}

\section{Examples} \label{app:examples}

\subsection{Beta distribution \boldmath$B(\alpha,\beta)$}

\textbf{Example \ref{example:beta} continued.}

We present the computation of $\tau$ for other $\alpha$ and $\beta$ values.

\begin{enumerate}[leftmargin=*]
\setcounter{enumi}{1}	
\item $\alpha<1, \beta<1$. In this case $U$ is a concave function.
	\begin{itemize}[leftmargin=*]
	\item Case $v_n=1$. Denote $t^*=\textnormal{arg}\:\underset{t\geq 0}{\max}\: U(x_n+t)=\left(\frac{1}{1+\frac{\beta-1}{\alpha-1}}-x_n\right)_+$. Then $U'(x_n+t)>0$ for $t<t^*$ and $U'(x_n+t)<0$ for $t>t^*$.  To compute $\tau$, we differentiate between the following two cases with respect to $t^*$.
	\begin{enumerate}[label=(\alph*)]
	\item For $U(x_n+t^*)-U(x_n)\geq -\log V_n$, $\tau$ solves
	\begin{equation*}
		-\log V_n=\int_0^{\tau}U'(x_n+t)_+dt=\int_0^{\tau}U'(x_n+t)dt=U(x_n+\tau)-U(x_n),
	\end{equation*}
	$\tau\in[0,t^*]$. Solving this involves a numerical line search.
	\item For $U(x_n+t^*)-U(x_n)<-\log V_n$, $\tau=1-x_n$.
	\end{enumerate}
	
	\item Case $v_n=-1$. Denote $t^*=\textnormal{arg}\:\underset{t\geq 0}{\max}\: U(x_n-t)$. Then $U'(x_n-t)<0$ for $t<t^*$ and $U'(x_n-t)>0$ for $t>t^*$. To compute $\tau$, we differentiate between following two cases with respect to $t^*$.
	\begin{enumerate}[label=(\alph*)]
	\item For $U(x_n-t^*)-U(x_n)\geq -\log V_n$, $\tau$ solves
	\begin{equation*}
		-\log V_n=\int_0^{\tau}(-U'(x_n-t))_+dt=\int_0^{\tau}-U'(x_n-t)dt=U(x_n-\tau)-U(x_n),
	\end{equation*}
	$\tau\in[0,t^*]$. 
	\item For $U(x_n-t^*)-U(x_n)<-\log V_n$, $\tau=1-x_n$.
	\end{enumerate}
	\end{itemize}

\item $\alpha<1, \beta>1$. In this case $U'(x)> 0$ for all $x\in(0,1)$, hence $U(x)$ increasing on $(0,1)$.
	\begin{itemize}[leftmargin=*]
	\item Case $v_n=1$. $\tau$ solves
	\begin{equation*}
		-\log V_n=\int_0^{\tau}U'(x_n+t)_+dt=\int_0^{\tau}U'(x_n+t)dt=U(x_n+\tau)-U(x_n),
	\end{equation*}
	$\tau\in(0,1-x_n)$.
	\item Case $v_n=-1$. $\tau=x_n$
	\end{itemize}
	
\item $\alpha>1, \beta<1$. In this case $U'(x)<0$ for all $x\in(0,1)$, hence $U(x)$ decreasing on $(0,1)$.
	\begin{itemize}[leftmargin=*]
	\item Case $v_n=1$. $\tau=1-x_n$.
	\item Case $v_n=-1$. $\tau$ solves
	\begin{equation*}
	\begin{aligned}
		-\log V_n &=\int_0^{\tau}(-U'(x_n-t))_+dt=\int_0^{\tau}-U'(x_n-t)dt=	 \int_0^{-\tau}U'(x_n+s)ds \\
		&=U(x_n-\tau)-U(x_n),
	\end{aligned}
	\end{equation*}
	$\tau\in(0,x_n)$.
	\end{itemize}
\end{enumerate}


\textbf{Example \ref{example:beta:decomposed} continued.}

We present the updates for another MCMC using the decomposition of $U$ into a sum of increasing and decreasing function, $U_1$ and $U_2$ respectively.
\begin{enumerate}[leftmargin=*]
\setcounter{enumi}{1}
		\item $\alpha<1, \beta<1$: $U(x)=U_1(x)+U_2(x)$, $x\in(0,1)$, for an increasing function $U_1(x)=-(\alpha-1)\log x$ and a decreasing function $U_2(x)=-(\beta-1)\log (1-x)$. To simulate $\tau$, we differentiate between the following two cases of positive and negative velocity.
	\begin{itemize}[leftmargin=*]
	\item Case $v_n=1$: $\tau=\textnormal{min}\{U_1^{-1}(U_1(x_n)-\log V_n)-x_n, 1-x_n\}$.
 	\item Case $v_n=-1$: $\tau=\textnormal{min}\{x_n-U_2^{-1}(U_2(x_n)-\log V_n),x_n\}$.
	\end{itemize}

	\item $\alpha>1, \beta<1$: $U(x)=U_2(x)$, $x\in(0,1)$, for a decreasing function $U_2(x)=-(\alpha-1)\log x-(\beta-1)\log (1-x)$.	
	\begin{itemize}[leftmargin=*]
	\item Case $v_n=1$: $\tau=1-x_n$.
	\item Case $v_n=-1$: $\tau=\textnormal{min}\{x_n-U_2^{-1}(U_2(x_n)-\log V_n),1-x_n\}$, $U_2^{-1}(\cdot)$ computed numerically.
	\end{itemize}

	\item $\alpha<1, \beta>1$: $U(x)=U_1(x)$, $x\in(0,1)$, for an increasing function $U_1(x)=-(\alpha-1)\log x-(\beta-1)\log (1-x)$.
	\begin{itemize}[leftmargin=*]
	\item Case $v_n=1$: $\tau=\textnormal{min}\{U_1^{-1}(U_1(x_n)-\log V_n)-x_n, 1-x_n\}$, $U_1^{-1}(\cdot)$ computed numerically.
	\item Case $v_n=-1$: $\tau=x_n$.
	\end{itemize}
\end{enumerate}

\subsection{Gaussian distribution $\mathcal{N}(\mu,\sigma^2)$}

We illustrate the sampling in the case of Gaussian distribution $\mathcal{N}(\mu,\sigma^2)$. Using the sampler from Section \ref{sec:sampler}, the updates are as follows. Given the current position $x_n$ and the sampled parameters $V_n$ and $v_n$, we have $\lambda(t)=(U'(x_n+v_nt)v_n)_+$, where $U(x)=\frac{(x-\mu)^2}{2\sigma^2}$ is negative log-density of the Gaussian $\mathcal{N}(\mu,\sigma^2)$ distribution. Denote $t^*=\textnormal{arg}\:\underset{t\geq 0}{\min}\:U(x_n+v_nt)=\left(\frac{\mu-x_n}{v_n}\right)_+$. To compute $\tau$, we differentiate between the following two cases.
\begin{itemize}[leftmargin=*]
\item Case $\frac{\mu-x_n}{v_n}\geq 0$. $\tau\geq t^*=\frac{\mu-x_n}{v_n}$ solves $U(x_n+v_n\tau)=-\log V_n$, giving us $\tau$ explicitly
\begin{equation*}
		\tau=\frac{\mu-x_n}{v_n}+\frac{1}{|v_n|}\sqrt{2\sigma^2(-\log V_n)}.
\end{equation*}
\item Case $\frac{\mu-x_n}{v_n}<0$. $\tau\geq t^*=0$ solves $U(x_n+v_n\tau)-U(x_n)=-\log V_n$, thus
\begin{equation*}
	\tau=\frac{\mu-x_n}{v_n}+\sqrt{\frac{(x_n-\mu)^2}{v_n^2}+\frac{2\sigma^2(-\log V_n)}{v_n^2}}.	
\end{equation*}

As described in Section \ref{sec:sampling:via:decomposition}, another sampler uses the decomposition $U=U_1+U_2$, with $U_1(x)=\frac{(x-\mu)^2}{2\sigma^2}\mathbb{I}_{\{x\geq \mu\}}$ and $U_2(x)=\frac{(x-\mu)^2}{2\sigma^2}\mathbb{I}_{\{x\leq \mu\}}$. Given $U_1$ and $U_2$ this sampler has an explicit expression for $\tau$.
\end{itemize}

\section{Proofs} \label{app:proofs}

\begin{proof-of-theorem}{\ref{thm:stationarity:decomposition}}
Since
\begin{equation}
\begin{aligned}
	\int_{-\infty}^{\infty}\pi(x_n)K(x_n,x_{n+1})dx_n
	&=\int_{-\infty}^{x_{n+1}}U_1'(2x_{n+1}-x_n)e^{-U_1(2x_{n+1}-x_n)+U_1(x_n)}\pi(x_n)dx_n \\
	&+\int_{x_{n+1}}^{\infty}-U_2'(2x_{n+1}-x_n)e^{-U_2(2x_{n+1}-x_n)+U_2(x_n)}\pi(x_n)dx_n,
\end{aligned}
\end{equation}
differentiating the above with respect to $x_{n+1}$ we get 
\begin{equation*}
\begin{aligned}
	&U_1'(x_{n+1})\pi(x_{n+1}) +\int_{-\infty}^{x_{n+1}}\frac{d}{dx_{n+1}}\left(U_1'(2x_{n+1}-x_n)e^{-U_1(2x_{n+1}-x_n)+U_1(x_n)}\right)\pi(x_n)dx_n\\
	&+U_2'(x_{n+1})\pi(x_{n+1})+\int_{x_{n+1}}^{\infty}\frac{d}{dx_{n+1}}\left(-U_2'(2x_{n+1}-x_n)e^{-U_2(2x_{n+1}-x_n)+U_2(x_n)}\right)\pi(x_n)dx_n.
\end{aligned}
\end{equation*}
To prove $\pi(x)\:\propto\:e^{-U(x)}$ is stationary for this kernel, it suffices to show
\begin{equation*} 
\begin{aligned}
	2\pi'(x_{n+1})&=-2U'(x_{n+1})\pi(x_{n+1}) \\
	&=\int_{-\infty}^{x_{n+1}}\frac{d}{dx_{n+1}}\left(U_1'(2x_{n+1}-x_n)e^{-U_1(2x_{n+1}-x_n)}\right)e^{U_1(x_n)}\pi(x_n)dx_n \\
	&\qquad +\int_{x_{n+1}}^{\infty}\frac{d}{dx_{n+1}}\left(-U_2'(2x_{n+1}-x_n)e^{-U_2(2x_{n+1}-x_n)}\right)e^{U_2(x_n)}\pi(x_n)dx_n,
\end{aligned}
\end{equation*}
i.e.~suffices to show
\begin{equation}\label{eq:suff:stationarity:conditon}
\begin{aligned}
	&-2U'(x_{n+1})e^{-U(x_{n+1})}\\
	&=\int_{-\infty}^{x_{n+1}}\left(2U_1^{''}(2x_{n+1}-x_n)-2{U_1'}^2(2x_{n+1}-x_n)\right)e^{-U_1(2x_{n+1}-x_n)}e^{-U_2(x_n)}dx_n \\
	&\qquad +\int_{x_{n+1}}^{\infty}\left(-2U_2^{''}(2x_{n+1}-x_n)+2{U_2'}^2(2x_{n+1}-x_n)\right)e^{-U_2(2x_{n+1}-x_n)}e^{-U_1(x_n)}dx_n.
\end{aligned}
\end{equation}
After the change of variables $u=x_n$ in the first integral and $u=2x_{n+1}-x_n$ in the second integral, the right hand side of \eqref{eq:suff:stationarity:conditon} becomes
\begin{equation*}
\begin{aligned}
	&2\int_{-\infty}^{x_{n+1}}\left(U_1^{''}(2x_{n+1}-u)-{U_1'}^2(2x_{n+1}-u)\right)e^{-U_1(2x_{n+1}-u)}e^{-U_2(u)}du \\
	&\qquad+2\int_{-\infty}^{x_{n+1}}\left(-U_2^{''}(u)+{U_2'}^2(u)\right)e^{-U_2(u)}e^{-U_1(2x_{n+1}-u)}du \\
	&=2\int_{-\infty}^{x_{n+1}}\left(U_1^{''}(2x_{n+1}-u)-U_2^{''}(u)+{U_2'}^2(u)-{U_1'}^2(2x_{n+1}-u)\right)e^{-U_1(2x_{n+1}-u)}e^{-U_2(u)}du\\
	&=2\int_{-\infty}^{x_{n+1}}\left(U_1^{''}(2x_{n+1}-u)-U_2^{''}(u)\right)e^{-U_1(2x_{n+1}-u)-U_2(u)}du \\
	&\qquad +2\int_{-\infty}^{x_{n+1}}\left(U_1'(2x_{n+1}-u)-U_2'(u)\right)\left(-U_1'(2x_{n+1}-u)-U_2'(u)\right)e^{-U_1(2x_{n+1}-u)-U_2(u)}du. 
\end{aligned}	
\end{equation*}
Denoting $\kappa(u)=-U_1'(2x_{n+1}-u)-U_2'(u)$ and $g(u)=e^{-U_1(2x_{n+1}-u)-U_2(u)}$, the above expression becomes
\begin{equation*}
\begin{aligned}
	&2\int_{-\infty}^{x_{n+1}}\kappa'(u)g(u)du+2\int_{-\infty}^{x_{n+1}}\kappa(u)g'(u)du=2\int_{-\infty}^{x_{n+1}}(\kappa(u)g(u))'du \\
	&=2\kappa(x_{n+1})g(x_{n+1})
	=-2(U_1'(x_{n+1})+U_2'(x_{n+1}))e^{-U(x_{n+1})}=-2U'(x_{n+1})e^{-U(x_{n+1})}.
\end{aligned}	
\end{equation*}

\end{proof-of-theorem}

\end{appendices}

\bibliography{cite}
\bibliographystyle{plainnat}

	
\end{document}